\newtheorem{theorem}{Theorem}
\def\tsc#1{\csdef{#1}{\textsc{\lowercase{#1}}\xspace}}
\begin{document}
\let\WriteBookmarks\relax
\def\floatpagepagefraction{1}
\def\textpagefraction{.001}

\shorttitle{Startup Delay Aware Short Video Ordering: Problem, Model, and A Reinforcement Learning based Algorithm}

\shortauthors{GAO et~al.}

                     
\title [mode = title]{Startup Delay Aware Short Video Ordering: Problem, Model, and A Reinforcement Learning based Algorithm}      

\tnotetext[1]{This work was supported in part by the National Nature Science Foundation of China under Grant 61872031, Grant 61572071, Grant 61531006, and Grant 61872331.
}

%

\author[1]{Zhipeng Gao}
\ead{zhpgao@bjtu.edu.cn}
\credit{Investigation, Methodology, Data analysis, Simulation, Writing – original draft}

\author[1]{Chunxi Li}

\ead{chxli1@bjtu.edu.cn}
\credit{Investigation, Methodology, Data analysis, Writing – original draft}

\author[1]{Yongxiang Zhao}
\cormark[1]

\ead{yxzhao@bjtu.edu.cn}
\credit{Investigation, Methodology, Data analysis, Writing – review \& editing}

\author[2]{Baoxian Zhang}
\ead{bxzhang@ucas.ac.cn}
\credit{Investigation, Methodology, Data analysis, Writing – review \& editing}

\affiliation[1]{
organization={School of Electronic and Information Engineering, Beijing Jiaotong University},
city={Beijing},
country={China}}

\affiliation[2]{organization={Research Center of Ubiquitous Sensor Networks, University of Chinese Academy of Sciences},
    city={Beijing},
    country={China}}


\cortext[cor1]{Corresponding author}



\begin{abstract}
Short video applications have attracted billions of users on the Internet and can satisfy diverse users' fragmented spare time with content-rich and duration-short videos. To achieve fast playback at user side, existing short video systems typically enforce burst transmission of initial segment of each video when being requested for improved quality of user experiences.
However, such a way of burst transmissions can cause unexpected large startup delays at user side. This is because users may frequently switch videos when sequentially watching a list of short videos recommended by the server side, which can cause excessive burst transmissions of initial segments of different short videos and thus quickly deplete the network transmission capacity.
In this paper, we adopt token bucket to characterize the video transmission path between video server and each user, and accordingly study how to effectively reduce the startup delay of short videos by effectively arranging the viewing order of a video list at the server side.
We formulate the optimal video ordering problem for minimizing the maximum video startup delay as a combinatorial optimization problem and prove its NP-hardness. We accordingly propose a Partially Shared Actor Critic reinforcement learning algorithm (PSAC) to learn optimized video ordering strategy. Numerical results based on a real dataset provided by a large-scale short video service provider demonstrate that the proposed PSAC algorithm can significantly reduce the video startup delay compared to baseline algorithms.
\end{abstract}



\begin{keywords}
Short video \sep Startup delay \sep Token bucket \sep Reinforcement learning
\end{keywords}
\maketitle

\section{Introduction}
In recent years, short video applications (e.g., TikTok~\cite{tiktok}, YouTube Shorts~\cite{youtube}, and Kwai~\cite{kwai}) have attracted billions of users, which can satisfy various users' fragmented spare time with content-rich and duration-short videos. According to a public report \cite{report}, the number of monthly active users on TikTok alone has been up to 1.4 billion in 2022 and more than 1 billion videos are watched by these users every day. However, research on short video transmissions is still rare. 
To transmit a short video to a user, existing short video systems typically use a two-phase transmission mechanism including burst-trans and slow-trans phases, which is in fact inherited from traditional online video-on-demand (VoD) systems~\cite{huang2014buffer}. 
The burst-trans phase is to send the initial segment of a video at a very fast rate for fast startup playback, and once the playback starts, the slow-trans phase is used to send the remaining data of the video at a relatively low rate almost equal the video encoding rate to ensure continuous playback while reducing the potential waste of downloaded video data due to random user departures \cite{allard2020measuring}. 

However, use of such two-phase video transmissions can cause unexpected high startup delays at user side due to some inherent characteristics of short video systems. Details are as follows.
First, short video users are very sensitive to video startup delay, typically expecting very small video startup delays less than one second \cite{zhang2022measurement}, while VoD users can tolerate startup delays of several seconds or even more. This is because a short video is just a few tens of seconds long while a traditional VoD video is typically tens of minutes long \cite{chen2021edge}. 
Second, short video users can quickly determine their preference for a video by only going through its first few screens and can easily switch a video by simply scrolling their screens \cite{zhang2022duasvs}. Therefore, a short video user may frequently switch videos when sequentially watching videos according to a video list provided by the server side, leading to high dense burst transmissions of initial segments of the videos in short time so as to deplete the network transmission capacity quickly. In this case, the short video users will suffer significantly large startup delays.

Shortening the startup delay of short videos can largely improve the quality of user experiences and it has been a critical issue for improving the performance of short video systems~\cite{nguyen2022network}. The startup delay of a video experienced by a user is the length of time interval from the time instant when the user scrolls his/her screen to request the video to the time instant when the user finishes downloading the initial segment of the video \cite{nguyen2022network}.
In this aspect, there have been some studies that can be helpful to alleviate the startup delays of short videos. In \cite{chen2021edge,mao2017survey,chen2019study,hu2021collaborative}, the authors proposed to cache short videos at edge servers in the vicinity of users, which could be costly due to the requirement of deploying many cache servers at network edge. In \cite{nguyen2022network,krishnamoorthi2014quality,chen2019energy}, the authors proposed some prefetching mechanisms, where a client is suggested to prefetch the next video’s data during the playback of the current video. These mechanisms require modifications to the client software and thus are often difficult to implement in practice. Moreover, severe waste of downloaded video data can occur in the presence of frequent video switching at user side, which is quite common in short video watching. Therefore, in-depth study is still needed to seek effective methods for reducing the startup delay of short videos. 

In this paper, we study how to effectively reduce the short video startup delay by taking advantage of the inherent characteristics of short video systems. 
On the one hand, from the perspective of short video transmissions, given the transmission capacity of a session for delivering videos from video server to a user, which can be characterized by using a token bucket for traffic shaping from the perspective of network operator \cite{kanuparthy2011shaperprobe}, 
the burst transmissions of initial segments of videos are bound to quickly consume the tokens in the bucket, which can largely hurt the session's available capacity for burst transmissions of subsequent videos. In contrast, once a video starts playing, which indicates that the transmission enters the slow-trans phase, the session's available transmission capacity can be gradually recovered as the tokens are replenished as time evolves. 
Obviously, a longer viewing time is more helpful to restoring the transmission capacity, as more tokens can be accumulated during this period.
On the other hand, from the perspective of users' viewing habits, a user usually watches short videos sequentially based on the video list received from the server side, while the viewing time of a video can be predicted
according to \cite{zhan2022deconfounding,Lin2023tree}. 
As a result, given a set of videos, each of which has a certain viewing time, a natural idea for reducing the video startup delays for their watching is to properly arrange their playback order by interlacing the videos with long viewing time and those with short viewing time. In this way, the excessively consumed network burst transmission capacity caused by a video with short viewing time can be replenished as much as possible by the subsequent video(s) with long viewing time. Accordingly, sufficient burst transmission capacity can be preserved to meet the subsequent burst transmission demands and thus reduce the video startup delay. 

Realization of the above idea for reduced startup delay involves the following three aspects. First, an effective transmission model is needed to quantitatively characterize the impact of burst-trans and slow-trans on the transmission capacity. Second, a mathematical model is needed to calculate the video startup delays for a given video list. Third, due to the huge search space of up to $N!$ ordering choices for a video list containing $N$ videos, an effective method is required to quickly find an optimized video list for minimizing the maximum startup delay.

In this paper, we use token bucket to help formulate the optimization problem of video ordering and use reinforcement learning based algorithm to solve the problem. First, we introduce a token bucket to characterize the transmission capacity of the delivery path between video server and a user, where the minimal network transmission capacity equals the token rate and the maximal (burst) network transmission capacity is determined by the token bucket capacity. 
Then, given a video list where each video has a certain viewing time, we use the change of number of tokens in the bucket to characterize the transmission process of different segments in a video and accordingly give closed form expression for the startup delay of each video. 
Following that, we formulate the video ordering problem for minimizing the maximum video startup delay as a combinatorial optimization problem and prove this problem is NP-hard. To address this problem, we propose a Partially Shared Actor-Critic learning algorithm (PSAC), which is designed based on the None-module Shared Actor-Critic learning algorithm (NSAC) in \cite{bello2016neural}. However, different from NSAC, our PSAC algorithm allows actor and critic share some functionally identical modules to reduce the amount of model parameters so as to improve the convergence speed and also startup delay performance of generated video list. 
In PSAC, the actor is for generating a video list for a given video set and calculating the maximal startup delay according to the token bucket based transmission model, while the critic is for estimating the maximal startup delay for the video list, and both of them are optimized by using policy gradient and stochastic gradient descent aiming to learn an optimal policy that can generate a video list tending to minimize the maximum startup delay. 
We conduct extensive simulations on a dataset provided by a large-scale short video service provider, which contains more than 6 million viewing records of 16000 users. The numerical results demonstrate the significantly high performance of our PSAC algorithm, which can reduce the average maximum startup delay by up to 56\%, 54\%, 45\%, and 42\% compared with RAND, which randomly orders videos, INTL, which interleaves videos according to their viewing time such that neighbor videos have the largest difference in viewing time, GRDY, which orders videos in a greedy manner, and NSAC, respectively.

The remainder of this paper is organized as follows. 
Section~\ref{rw} briefly reviews related work. 
Section~\ref{smps} describes the short video transmission model, motivation for video ordering, formulates the optimal video ordering problem under study, and prove its NP-hardness.
In Section~\ref{psac}, we propose the PSAC algorithm for optimized video ordering. Section~\ref{pe} evaluates the performance of PSAC by comparing it with baseline algorithms. Finally, Section~\ref{con} concludes this paper.

\section{Related Work}
\label{rw}
In this section, we briefly review related work in the following two aspects: Methods helpful for alleviating the startup delay of short videos and methods for machine learning for sequence reordering, which can inspire us to conduct our work in this paper.

\subsection{Startup Delay Reduction}

Edge caching and video prefetching are two main techniques that can be used to effectively reduce the startup delay of short videos. Next, we shall respectively introduce typical work belonging to these two aspects.

Edge caching works to deploy cache servers at the network edge close to users and allows users to fetch as much video data as possible from the cache servers rather than remote video servers, so as to reduce video transmission delay and thus reduce the video startup delay \cite{li2021joint}. 
Typical work in this direction is as follows. In \cite{chen2019study}, Chen et al.~suggested that edge servers caching top popular short videos can improve the quality of user experiences for short videos.
In \cite{hu2021collaborative}, Hu et al.~proposed a collaborative caching algorithm for multi-edge-server scenarios so as to allow the edge servers to jointly cache more top popular short videos to better serve users in the vicinity by maximizing the overall user benefit (including the transmission delay performance).
In \cite{chen2021edge}, Chen et al.~proposed a transcoding assisted short video caching framework that utilizes the storage and computing resources at edge servers to facilitate both video caching and transcoding for improved caching performance.
However, the cost of realizing the above edge caching methods can be very high because a lot of cache servers need to be deployed at the network edge and also much network resources need to be leased to provide such edge caching services. 
Client-side video prefetching can also be used to reduce the video startup delay by prefetching the data of the subsequent video(s) during the playback of a current video. 
Typical work in this direction is as follows. In \cite{chen2019energy}, Chen et al.~proposed a one-video prefetching mechanism, which prefetches only the next video while playing a current video. 
In \cite{zhang2020apl}, Zhang et al.~proposed a multi-video prefetching mechanism, which can prefetch multiple subsequent short videos in parallel while playing a current video, in order to improve the playback smoothness and reduce the data wastage. 
In \cite{zhang2022duasvs}, Zhang et al.~proposed a reinforcement learning (RL) based multi-video prefetching mechanism, where each video is associated with a threshold that limits the video's data amount to be prefetched, and the threshold is decided by a RL model trained with the video's watching history.
However, video prefetching methods have the following drawbacks. First, they are difficult to develop and deploy since their implementation and deployment involve modifications to client software. 
Second, video prefetching may cause serve data wastage since users may switch videos frequently or even quit from the watching.

\subsection{Machine Learning for Sequence Reordering}

Here, we briefly review machine learning (ML) based research for solving combinatorial optimization problems (COPs),
considering that the video ordering problem under study in this paper is in fact a typical COP problem, which needs to convert an input (video) sequence to another output sequence meeting a certain optimization objective while elements in the input and output sequences are identical.
Solving such COP problems by using ML methods is an emerging trend recently \cite{bengio2021machine}, as these problems are typically NP-hard and in general do not have polynomial-time optimal solutions \cite{mazyavkina2021reinforcement}.
In \cite{vinyals2015pointer}, Vinyals et al.~proposed a supervised learning (SL) based encoder-decoder structured pointer network (Ptr-Net), which can learn the conditional probability of an output sequence whose elements are the indices of the elements of the input sequence.
Following that, considering that the optimal output sequence for SL training is difficult to obtain, in \cite{bello2016neural}, Bello et al.~proposed an actor-critic structured RL framework for training the Ptr-Net, where the actor implements the Ptr-Net and the critic (constructed by an RNN and a fully-connected network) learns the expected value of the optimization objective for the output sequence of the Ptr-Net, and both are jointly trained by using policy gradient and stochastic gradient descent. 
Our PSAC algorithm in this paper also uses RL, where the difference is that, in PASC, the actor and critic are allowed to share some functionally identical modules to reduce the amount of model parameters so as to improve the convergence speed and performance.

\section{System Model and Problem Statement}
\label{smps}
In this section, we first introduce a token bucket to model the transmission of short videos and elaborate how the playback order of videos can affect their associated startup delays. Then, we formulate the video ordering problem for minimizing the maximum startup delay. Finally, we prove this problem is NP-hard.

\subsection{Video Transmission Model}
\label{vtm}
In this paper, we adopt token bucket (see Fig.~\ref{FIG:transmissionmodel}) to model the short video transmission from a video server to a user by limiting the token (generation) rate as well as the maximum burst size, considering that it has been widely used by network operators in traffic shaping~\cite{kanuparthy2011shaperprobe}.
Next, we shall introduce each of the components from right to left in Fig.~\ref{FIG:transmissionmodel} .

First, the server recommends multiple short videos at a time and sends them to the user according to a certain playback order. For each video, the server supports two rates (i.e., burst rate and normal rate) for injecting the video data into the data buffer of the token bucket. 
The burst rate is used to quickly send the initial segment of a video aiming to enable the user to startup the playback as quick as possible. Once the video starts playing, normal rate is used to transmit the rest segments to enable smooth playback while maximally avoiding data wastage due to random video switching at user side.
We assume that, the burst rate is much higher than the token rate, and the normal rate equals the video encoding rate, while the token rate is larger than or equal to the video encoding rate. 
That is, normal rate = video encoding rate $\leq$ token rate $\ll$ burst rate. In addition, we assume that each video is always played at its encoding rate. 

Then, the token bucket is used to shape the data traffic from video server to a user as follows. Assume that the token bucket has a certain capacity, denoted by $C$, while the data buffer at user side is infinitely large. New tokens are injected into the token bucket at a certain constant rate $\mu$. As long as the amount of tokens in the bucket is sufficient, the download rate experienced by the user should equal the current sending rate at the server side (i.e., either the burst rate or the normal rate)\footnote{Here, we assume that the available bandwidth on the path from video server to each user is not the bottleneck}; otherwise, the download rate equals the token rate. 

Finally, the user receives video data from the token bucket and experiences a certain startup delay. For each video, the user can only start its playback after fully receiving the initial segment of the video.  
Once the user switches to a next video by scrolling his/her screen, the server immediately interrupts the current video’s transmission (if any) and starts to transmit the new one with no delay. 
As an important factor affecting the quality of user experiences, a video's startup delay refers to the length of the time interval from the time when the user requests the video to the time when the user starts playing that video, and considering the limited token bucket capacity and token rate (e.g., 1MB and 2Mbps, respectively) 
\cite{kim2012isptrafficshaping}, the startup delays at user side can be greatly affected by the playback order arranged at the server side (see the next subsection for illustration).

\begin{figure}
\centering		\includegraphics[width=0.45\textwidth]{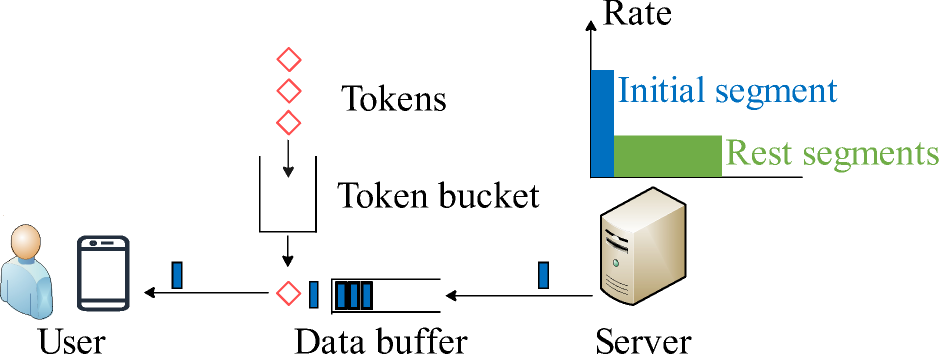}
\caption{Token bucket based transmission model.}
\label{FIG:transmissionmodel}
\end{figure}

\subsection{Observations and Motivation}
To motivate our study of optimized video ordering, here, we first present two examples to illustrate the effect of video playback order on startup delay, and then propose the key idea behind our design in this paper.

Consider video server at a time recommends eight videos (i.e., $v_1, v_2, \cdots, v_8$), each with the same encoding rate while having different predicted viewing time: The first four have short viewing time and the last four have long viewing time. Here, we assume video user watches these videos according to their predicted viewing time.
Next, let us consider two playback orders: 1) ($v_1$, $v_2$, $v_3$, $v_4$, $v_5$, $v_6$, $v_7$, $v_8$) and 2) ($v_1$, $v_5$, $v_2$, $v_6$, $v_3$, $v_7$, $v_4$, $v_8$), and show how the different playback orders affect the startup delay.
\begin{figure}
\centering		\includegraphics[width=0.45\textwidth]{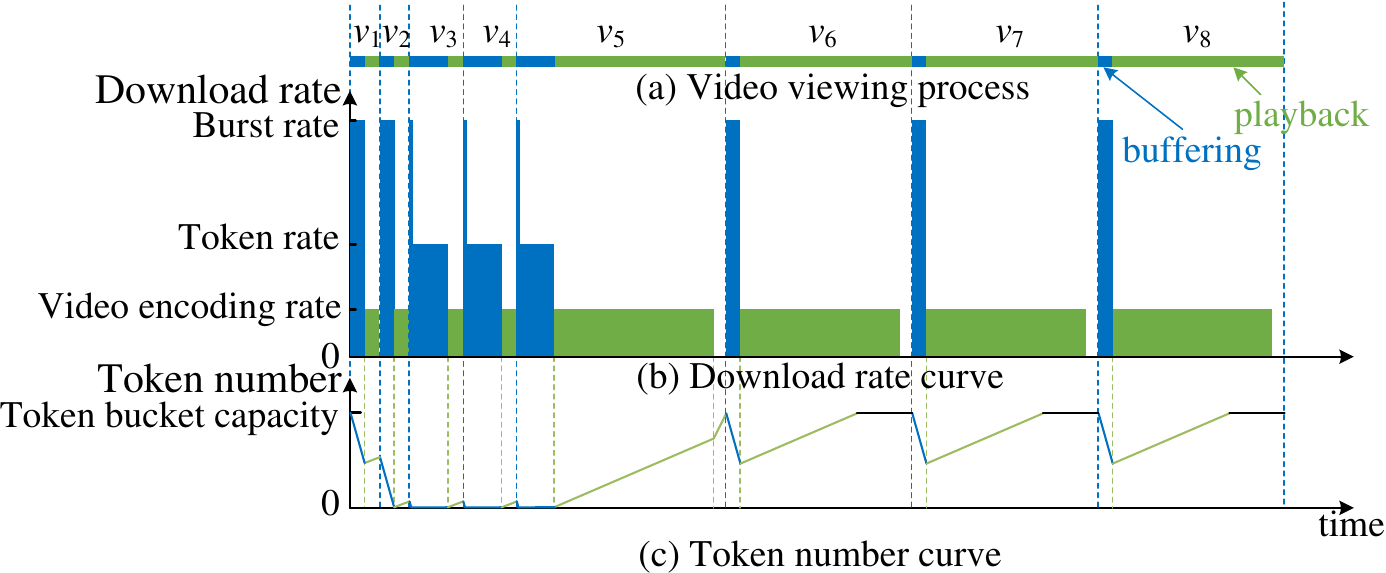}
\caption{Impact of playback order. In this example, there are eight short videos to be played and in the following order: $v_1$, $v_2$, $v_3$, $v_4$, $v_5$, $v_6$, $v_7$, and $v_8$).}
\label{FIG:example1}
\end{figure}

\begin{figure}
\centering
\includegraphics[width=0.45\textwidth]{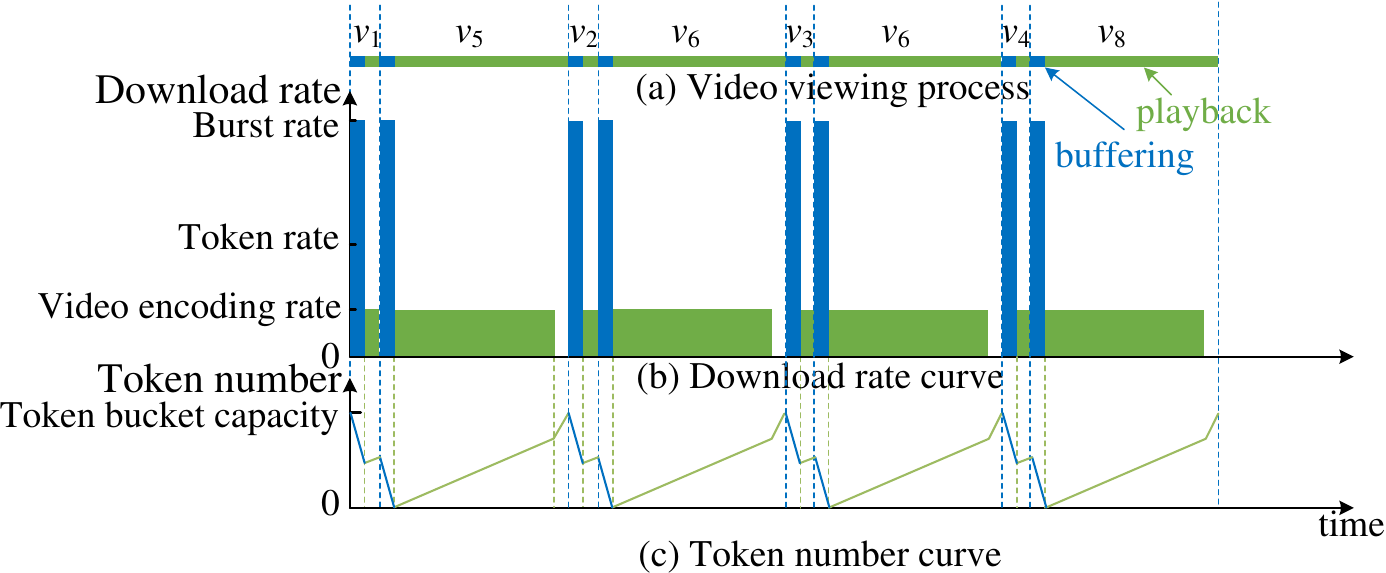}
\caption{Impact of playback order. The videos to be played are the same with those in Fig.~\ref{FIG:example1} but played in the following order: $v_1$, $v_5$, $v_2$, $v_6$, $v_3$, $v_7$, $v_4$, and $v_8$.}
\label{FIG:example2}
\end{figure}

Figs.~\ref{FIG:example1} and~\ref{FIG:example2} respectively illustrate how different playback orders can cause different cascading effects on the number of tokens in the token bucket (see subfigure (c) in both figures), user-experienced video download rate (see subfigure (b)), and video startup delay (see subfigure (a)). 
The three subfigures in either figure share the same $x$-axis, which represents the evolution of time.  
In each subfigure, each of the videos is associated with a blue block and a green block, which correspond to the deliveries of the initial segment and the rest segments of that video, respectively. Specifically, the blue blocks for a video in subfigs. (a)-(c) represent the video's startup delay, download rate of its initial segment at the user side, and the fluctuation of number of remaining tokens in the token bucket, respectively, while the green blocks in the three subfigures represent its viewing time, download rate of its rest segments, and fluctuation of number of remaining tokens in the token bucket, respectively.  
We will see that arranging those short-viewing-time videos together leads to token exhaustion, and therefore longer initial segment download time and worsening startup delays at subsequent videos.

Example I: Fig.~\ref{FIG:example1} shows the result of playback order ($v_1$, $v_2$, $v_3$, $v_4$, $v_5$, $v_6$, $v_7$, $v_8$), where those short-viewing-time videos $v_1$, $v_2$, $v_3$, $v_4$ are arranged together and thus result in rather poor startup delays. 
From this figure, it is seen that for the first two short-viewing-time videos, $v_1$ and $v_2$, the token bucket still has enough tokens to satisfy the burst transmissions of their initial segments. 
Even though, due to the short viewing time of these two videos, the number of newly generated tokens during their viewing are far less than the number of tokens consumed for the burst transmission of their initial segments.
As a result, when transmitting $v_3$, there are few remaining tokens and new tokens are generated merely at the token rate, thus leading to a quite large startup delay. Due to similar reasons, $v_4$ and $v_5$ also have large startup delays. 
After that, each subsequent video (starting from $v_5$) has a long enough viewing time to fully replenish the consumed tokens, resulting in very small startup delays for $v_6, v_7$, and $v_8$. 

Example II: Fig.~\ref{FIG:example2} shows the result of playback order ($v_1$, $v_5$, $v_2$, $v_6$, $v_3$, $v_7$, $v_4$, $v_8$), where short-viewing-time videos and long-viewing-time videos are interleaved and thus result in small startup delays for all the videos. 
From this figure, it is seen that, although the transmission of each short-viewing-time video consumes certain tokens, the remaining tokens are still sufficient to support the burst transmission of the initial segment of the next long-viewing-time video; 
In turn, each long-viewing-time video brings a long enough time to replenish enough tokens to fully support the next video’s initial segment’s burst transmission. 

Motivated by the above two examples, it should be maximally avoided to put short-viewing-time videos together, which may lead to token depletion and thus prolonged startup delay; Instead, it is good to interleave short-viewing-time and long-viewing-time videos, so that the tokens consumed for the transmission of a short-viewing-time video can be fully replenished during the viewing time of the subsequent long-viewing-time video and thus allow each video to have a small startup delay. 

Motivated by this observation, we shall formulate the video ordering problem for minimizing the maximum startup delay in the next subsection.

\subsection{Problem Formulation}

To ease the description, we list the symbols used hereafter in Table~\ref{tbl1}.

\begin{table}[width=.9\linewidth,cols=2,pos=h]
\caption{Symbols Used.}\label{tbl1}
\begin{tabular*}{\tblwidth}{@{} LLLL@{} }
\toprule
Symbols & Descriptions \\
\midrule
$V$ & Video set\\
$T_v$ & Duration of video $v$ \\
$\tau_v$ & Viewing time of video $v$ \\
$r_v$ & Encoding rate of video $v$  \\
$B_v$ & Size of initial segment of video $v$ \\
$X$ & Video list\\
$x_i$ & The video at the $i^{th}$ position of video list $X$\\
$\hat{r}$ & Burst rate of video server\\
$C$ & Token bucket capacity\\
$\mu$ & Token rate\\
$K_v$ & \makecell[cl]{Number of available tokens in the bucket \\when video $v$ starts to be transmitted}\\
$d_{v}$ & Startup delay of video $v$\\
$D$ & Maximum startup delay of the videos in $X$\\
\bottomrule
\end{tabular*}
\end{table}

First, we give some preliminary knowledge and assumptions for the transmission model in Fig.~\ref{FIG:transmissionmodel} to work.
Suppose the server recommends a set of videos at a time, denoted as $V=\{1, 2, \cdots,|V|\}$, and further, for each video $v \in V$, the server knows its duration $T_v$, encoding rate $r_v$, size of its initial segment $B_v$, and its viewing time $\tau_v$, which could be predicted by using a certain prediction algorithm. 
Let vector $X$ represent a video list generated by the server, where each element $x_i \in X$ at position $i$ represents video $x_i$ in $V$. For example, $x_5 =10$ means video 10 is to be played in the fifth order.
Here, we assume $|X| = |V|$. 
The server sends videos to the user in the order in $X$, one video at a time.
To calculate an optimized video list for the viewing, the user is assumed to watch each video $v$ according to its viewing time $\tau_v$, and when the time runs out, the user immediately switches to the next video. 
We assume that the server can immediately receive the video switch request and terminate the current video’s transmission and start the next video’s transmission, without any delay.

Then, we define some symbols for characterizing the transmission model in Fig.~\ref{FIG:transmissionmodel}. Let $\hat{r}$ and $r_v$ denote the burst rate for the initial segment and normal rate for the rest segments at the server side, respectively, where $r_v$ is also the video encoding rate of video $v \in V$.
Let $C$ and $\mu$ denote the token bucket capacity and the token rate, respectively. 
Let $K_{x_i}$ denote the number of available tokens in the token bucket when the video $x_i$ at position $i$ of $X$ starts to be transmitted.

In this paper, we are aimed to minimize the maximum startup delay of all videos (denoted by $D$), which is defined as follows. 
\begin{equation}
    \centering
    D=\max _{i=1, 2, \cdots, |X|}{d_{x_i}},
    \label{Equ.D}
\end{equation}
where $d_{x_i}$ is the startup delay of video $x_i$ and can be calculated by Eq.~(\ref{Equ.d_{x_i}}) given the initial available tokens $K_{x_i}$ when video $x_i$ starts to be transmitted at the server side.
\begin{subequations}
    \begin{numcases}
        {d_{x_i}=}
            B_{x_i}/\ \hat{r}, \quad {\ K}_{x_i}+\mu B_{x_i}/\ \hat{r}\ \geq B_{x_i}. \label{Equ.d_{x_i}-a} \\
            (B_{x_i}-{\ K}_{x_i})/\mu, \quad $otherwise.$ \label{Equ.d_{x_i}-b}
    \end{numcases}    
    \label{Equ.d_{x_i}}%
\end{subequations}

According to Eq.~(\ref{Equ.d_{x_i}-a}), the startup delay $d_{x_i}$ equals the time the server bursts out the initial segment $B_{x_i}$ if the number of the total accumulated available tokens (i.e., sum of the number of initial available tokens and number of newly generated tokens during the delivery of the initial segment) is larger than or equal to $B_{x_i}$, i.e., is sufficient for the token bucket to send the initial segment at the burst rate; Otherwise, $d_{x_i}$ is calculated by Eq.~(\ref{Equ.d_{x_i}-b}), which actually equals the time required for sending an amount of data $B_{x_i}-{\ K}_{x_i}$ at the token rate $\mu$. By definition, it is the sum of the time for sending certain part of the initial segment of $x_i$ at the burst rate until all the tokens (including the initially available tokens and also the tokens newly generated during this period of time) run out and the time for sending the remaining initial segment at the token rate.
For the correctness of derivation of Eq.~(\ref{Equ.d_{x_i}-b}), please refer to Appendix~\ref{appendix_a} for detailed proof.

Moreover, the initial available tokens ${\ K}_{x_i}$ used in Eq.~(\ref{Equ.d_{x_i}}) is recursively calculated as follows, given the initial available tokens ${\ K}_{x_{i-1}}$ of the previous video $x_{i-1}$.
\begin{equation}
    \begin{aligned}
        K_{x_i} = \min(& C,\\
        & K_{x_{i-1}}-(B_{x_{i-1}}-\mu d_{x_{i-1}})+
        (\mu\tau_{x_{i-1}}-\\
        &\min(\tau_{x_{i-1}}r_{x_{i-1}}, T_{x_{i-1}}r_{x_{i-1}}-B_{x_{i-1}}))).
    \end{aligned}      
    \label{Equ.K_{x_i}}
\end{equation}

The above equation indicates that $K_{x_i}$ equals the previous video's initial available tokens $K_{x_{i-1}}$ minus the net token consumption $B_{x_{i-1}}-\mu d_{x_{i-1}}$ during the previous video's burst transmission plus the net token increment $\mu\tau_{x_{i-1}}-\min(\tau_{x_{i-1}}r_{x_{i-1}}, T_{x_{i-1}}r_{x_{i-1}}-B_{x_{i-1}})$ during the previous video's playback, while $K_{x_i}$ cannot be larger than $C$. 

Finally, we formulate the optimized video ordering problem for finding a video list $X$ to minimize the maximum startup delay $D$ of all videos as follows.
\begin{equation} 
    \begin{aligned}
        &\min_{X}{D}  \\        
        s.t. \quad & |X|=|V|\\
        & x_i \in V, \forall i \in [1,|X|],
    \end{aligned}
    \label{Equ.formulatedproblem}
\end{equation}
where the first constraint requires that the size of the video list $X$ should be the same to the size of the video set $V$, and the second constraint requires that any video in the video list $X$ should be a video in $V$.

\subsection{Intractability Analysis}

In this subsection, we prove the optimized video ordering problem formulated in the preceding subsection is NP-hard by reducing it to the number-partitioning problem \cite{NumberPartitioning}.

Before proving this claim, we first introduce some concepts. 
We first introduce the concept of “minimal required initial tokens’’ required for a video $v$ to deliver its initial segment at full burst rate, denoted by $P_v$, which is the minimal amount of required tokens in the bucket when the initial segment of video $v$ starts to be transmitted such that $P_v+B_v \mu/\hat{r}$ can exactly support the full delivery of the initial segment of $v$ at the burst rate while the amount of tokens in the bucket drops to zero at the end of its delivery. 
The net token increment caused by a video $v$’s delivery (denoted by $\delta_v$) is defined as the amount of tokens accumulated at the token bucket during the video’s playback as the video playback rate is less than the token rate. 
Next, we introduce the concepts of positive- and negative-gain videos.
A video $v$ is called a positive-gain video, if the condition $\delta_v \ge P_v$ holds, where the net token increment $\delta_v$ is calculated as $\delta_v=\mu\tau_v-\min(\tau_vr_v, T_v r_v-B_v)$ according to Eq.~(\ref{Equ.K_{x_i}}) and the minimal required initial tokens $P_v$ is calculated as $P_v = B_v - \mu B_v/ \hat{r}$ according to Eq.~(\ref{Equ.d_{x_i}}). Otherwise, it is called a negative-gain video.

\begin{theorem}\label{Intractability}
The optimal video ordering problem formulated in Eq.~(\ref{Equ.formulatedproblem}) is NP-hard.
\end{theorem}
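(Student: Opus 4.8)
\emph{Proof strategy.} The plan is to give a polynomial-time reduction \emph{from} the number-partitioning problem~\cite{NumberPartitioning} to the decision version of the video ordering problem of Eq.~(\ref{Equ.formulatedproblem}) (``does there exist an ordering $X$ with maximum startup delay $D\le D^\star$?''); since a candidate ordering can be checked in polynomial time by simulating Eqs.~(\ref{Equ.d_{x_i}})--(\ref{Equ.K_{x_i}}), this also puts the problem in NP. Recall a number-partitioning instance is a multiset of positive integers $a_1,\dots,a_n$ with total $A$, and the question is whether some subset sums to exactly $A/2$. I would take the natural token-bucket initialization $K_{x_1}=C$ (bucket full at the head of the list); the construction adapts with only minor changes under a different initialization.

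\emph{Construction.} Given the integers $a_i$, I build a video set consisting of $n$ \emph{positive-gain} ``filler'' videos $g_1,\dots,g_n$ and three identical \emph{negative-gain} ``demanding'' videos $w_1,w_2,w_3$, with parameters chosen so that: (i) each $g_i$ has a vanishingly small initial segment, so $P_{g_i}\approx 0$ and its startup delay is $\approx 0$, while its effective token gain is $a_i$ — realizable by setting $\tau_{g_i}=T_{g_i}$ and choosing the remaining parameters so that $\delta_{g_i}=\mu T_{g_i}-(T_{g_i}r_{g_i}-B_{g_i})=a_i$ with $r_{g_i}<\mu$; (ii) each $w_j$ has $\delta_{w_j}=0$ and minimal required initial tokens $P_{w_j}=A/2$, i.e.\ $B_{w_j}=(A/2)/(1-\mu/\hat r)$; and (iii) the bucket capacity is $C=A/2$. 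All quantities have bit-size polynomial in the input (scale the $a_i$ by $2$ if $A$ is odd). I set the threshold $D^\star=\max_v B_v/\hat r=B_{w_1}/\hat r$, the least delay the demanding videos can attain.

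\emph{Equivalence.} From Eqs.~(\ref{Equ.d_{x_i}})--(\ref{Equ.K_{x_i}}) one checks the token dynamics: a filler never lowers the token count and raises it by $a_i$ up to the cap $C$, and a filler never exceeds the threshold; a demanding video $w_j$ meets the threshold (is served at full burst) iff at least $A/2$ tokens are present when it starts, in which case it leaves the bucket empty. Hence an ordering achieves $D\le D^\star$ iff all of $w_1,w_2,w_3$ are full-burst. The three demanding videos split the list into a prefix, two internal gaps, and a suffix; the prefix is useless because the bucket is already full, so meeting the threshold forces the $a$-mass placed in each of the two internal gaps to be at least $A/2$. Since only $A$ total mass is available, each internal gap must carry exactly $A/2$ and the prefix/suffix must carry none — precisely a balanced partition of $\{a_1,\dots,a_n\}$. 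Conversely, from a balanced partition $I_1, I_2$ the ordering $(w_1,\,I_1,\,w_2,\,I_2,\,w_3)$ makes every demanding video full-burst and attains $D=D^\star$. Thus the ordering instance is a ``yes'' instance iff the partition instance is, which establishes the NP-hardness of Eq.~(\ref{Equ.formulatedproblem}).

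\emph{Main obstacle.} The delicate part is the ``no wasting'' direction: showing rigorously that no placement of fillers in the prefix or suffix, and no reordering of the $w_j$ among themselves, can let a demanding video reach $A/2$ tokens without drawing mass away from the two internal gaps. This needs a clean monotonicity/invariant argument on the token count along the list, derived from Eqs.~(\ref{Equ.d_{x_i}})--(\ref{Equ.K_{x_i}}), using that the bucket starts full (so the prefix is inert and $w_1$ is automatically full-burst) and that each full-burst demanding video resets the count to $0$. The remaining checks — that the chosen parameters satisfy the model constraints ($\tau_v\le T_v$, $B_v\le T_vr_v$, $r_v\le\mu<\hat r$), have polynomial size, and that the limit ``$B_{g_i}\to 0$'' can be made exact or absorbed by an $\varepsilon$-slack (harmless since the integer gaps are at least $1$) — are routine bookkeeping.
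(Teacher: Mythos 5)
Your reduction is sound, and it goes in the correct direction: you reduce \emph{from} \textsc{Partition} \emph{to} the decision version of the ordering problem, whereas the paper's text twice says it reduces the ordering problem ``to'' number partitioning, which is the wrong direction for establishing hardness (the intent is clearly the same as yours, but your statement is the rigorous one). Beyond that, your construction is genuinely different from the paper's. The paper encodes the instance in the \emph{negative-gain} videos and uses $M$ positive-gain videos as bucket-refilling separators; it must first prove a structural lemma that any optimal list interleaves one positive-gain video after every $Y$ negative-gain ones (its Conditions I--III and Fig.~6 exist to force this), and the residual problem is a min-max multiway grouping that it identifies with number partitioning somewhat loosely. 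You invert the roles: the integers $a_i$ live in positive-gain fillers with negligible initial segments, and three identical negative-gain ``demanding'' videos with $P_w=A/2=C$ and $\delta_w=0$ act as separators that each reset the bucket to zero when served at full burst. This buys you a crisp yes/no threshold $D^\star=B_w/\hat r$ (a demanding video meets $D^\star$ iff it sees $\ge A/2$ tokens, and otherwise its delay strictly exceeds $D^\star$), a direct two-way partition equivalence ($S_1,S_2\ge A/2$ with $S_1+S_2\le A$ forces $S_1=S_2=A/2$), and an explicit NP-membership argument the paper omits. The ``no wasting'' step you flag as the main obstacle does go through: the token count is nondecreasing across fillers and capped at $C$, the bucket starts full so any prefix filler is inert, and by Eq.~(\ref{Equ.K_{x_i}}) a full-burst demanding video consumes exactly $P_w=K$ and contributes $\delta_w=0$, so it hands the next gap exactly zero tokens; hence mass in the prefix, the suffix, or beyond the cap is irrecoverably lost. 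The remaining care points are exactly the ones you list: realize $\delta_w=0$ within the model (e.g.\ $r_w=\mu$), and pick $B_{g_i}$ small enough that the $n$ fillers' aggregate burst consumption and delays cannot close an integer-sized token deficit or exceed $D^\star$. With those constants pinned down, your argument is a complete and, in my view, tighter proof than the one in the paper.
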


\begin{proof}

To prove this claim, we construct the following video set $V$, which will be used in this proof. The constructed video set $V$ consists of $M$ positive-gain videos and $M \cdot Y+Y+1$ negative-gain videos, $M>1, Y>2$. 
Moreover, $V$ satisfies the following three conditions: 
\begin{itemize}

\item \textbf{Condition I}. All videos have the same initial segment size $B$, bit rate $r$, and thus have the same minimal required initial tokens $P$ for smoothly transmitting their respective initial segments without causing extra startup delay (comparing with the case of delivery at full burst rate);   

\item \textbf{Condition II}. Delivery of any of the positive-gain videos can fully fill the token bucket, which is set with a fixed capacity $C = Y \times P$, such that a positive-gain video can accommodate all its subsequent $Y$ consecutive videos' initial segments’ burst transmissions without causing extra startup delay; 

\item \textbf{Condition III}. The total net token increment by any $Y$ (consecutive) negative-gain videos in $V$ is smaller than $P$ while larger than that by any individual negative-gain video. Let $\delta_{\text{Y-NV}}^{\textsc{min}}$ denote the minimal total net token increment by any $Y$ (consecutive) negative-gain videos and $\delta_{\text{1-NV}}^{\textsc{max}}$ denote the maximum net token increment by any individual negative-gain video. We have $\delta_{\text{1-NV}}^{\textsc{max}}<\delta_{\text{Y-NV}}^{\textsc{min}}<P$.  
\end{itemize}

So far, we have finished the construction of the video set $V$ and also provided the token bucket related parameter required in our proof. Besides, we assume the initial token bucket is full before delivering video set $V$. 

Next, we shall prove that the video list with the minimal maximum startup delay for the video set $V$ meet the following form: First, all those negative-gain videos are arranged to form an initial list. Then, after each $Y$ consecutive negative-gain videos, starting from the very first video in the initial list, a positive-gain video is inserted. This operation is repeated until no positive-gain video remains. We accordingly obtain a video list as shown in Fig.~\ref{optimalvideolistform}. Obviously, the number of negative-gain videos after the last positive-gain video in the above list is $Y+1$.

\begin{figure}[t]
  \centering
  \subfloat[Form that an optimal video list must satisfy.\label{optimalvideolistform}]{\includegraphics[width=0.45\textwidth]{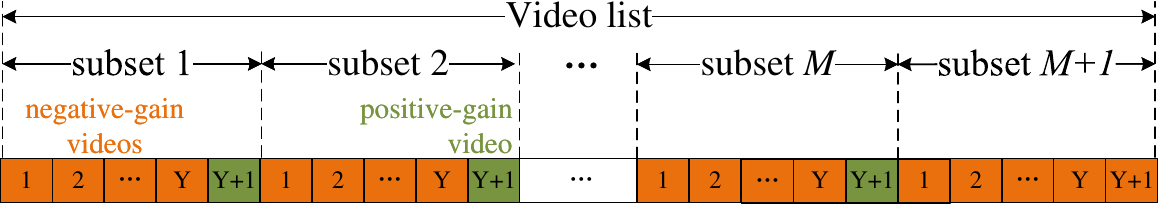}}\\
  \subfloat[Form that an optimal video list must not satisfy.\label{othervideolistform}]{\includegraphics[width=0.45\textwidth]{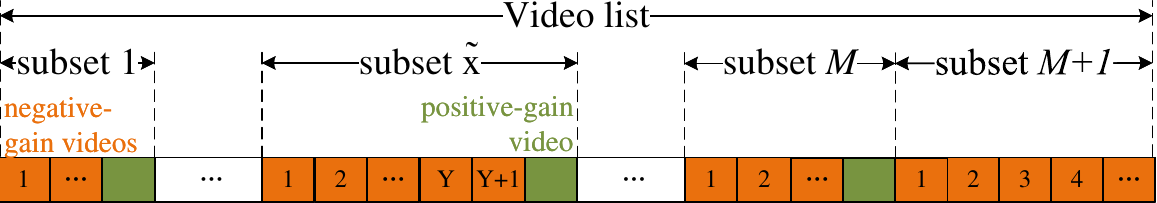}}\\
  \label{videolistform}
  \caption{Different video lists for the constructed video set $V$.}
\end{figure}

For any video list satisfying the form shown in Fig.~\ref{optimalvideolistform}, it can be easily derived that extra startup delay can (only) appear at each of the positive-gain videos and also the last negative-gain video in the list, whose extra startup delays are all smaller than or equal to $ (P-\delta_{\text{Y-NV}}^{\textsc{min}})/\mu$, and the exact maximum extra startup delay for this video list can be calculated as $(P - \min_{1 \le x \le M+1}\delta_{x[1:Y]})/\mu$, where $x \in [1, M+1]$ represents the subset number in the video list and $\delta_{x[1:Y]}$ represents the total net token increment by the first $Y$ negative-gain videos in subset $x$. 

For any video list that does not satisfy the form shown in Fig.~\ref{optimalvideolistform}, there must exist a subset (denoted by $\tilde{x}$), which contains no less than $Y+2$ videos and the $Y+2$-th video in this subset $\tilde{x}$ is preceded by $Y+1$ consecutive negative-gain videos (see Fig.~\ref{othervideolistform} for an example). In the subset $\tilde{x}$, both videos $Y+1$ and $Y+2$ will suffer extra startup delays and thus the extra startup delay of video $Y+2$ is determined by the net token increment only by video $Y+1$ and can be calculated as $(P - \delta_{\tilde{x}[Y+1]})/\mu$, where $\delta_{\tilde{x}[Y+1]}$ represents the net token increment by the negative-gain video $Y+1$ in subset $\tilde{x}$. 

According to Condition~III: $\delta_{\text{1-NV}}^{\textsc{max}}<\delta_{\text{Y-NV}}^{\textsc{min}}<P$, we have $\delta_{\tilde{x}[Y+1]}<\min_{1 \le x \le M+1}\delta_{x[1:Y]}<P$ and thus have $(P - \delta_{\tilde{x}[Y+1]})/\mu > (P - \min_{1 \le x \le M+1}\delta_{x[1:Y]})/\mu$. Therefore, a video list with the minimal maximum startup delay must meet the form shown in Fig.~\ref{optimalvideolistform}.


Finally, we prove that finding an optimal video list for the video set $V$ is NP-hard.
As analyzed above, an optimal video list must satisfy the form shown in Fig.~\ref{optimalvideolistform} and the maximum extra startup delay for a video list meeting such a form will be $(P - \min_{1 \le x \le M+1}\delta_{x[1:Y]})/\mu$. 
As a result, the problem to find an optimal video list is equivalent to finding an optimal arrangement of all negative-gain videos for a video list shown in Fig.~\ref{optimalvideolistform}, such that the video list has the maximized $\min_{1 \le x \le M+1}\delta_{x[1:Y]}$.
Obviously, this problem can be reduced to the number-partitioning problem, which is known to be NP-hard \cite{NumberPartitioning}. To the end, it is proved that solving the optimal video ordering formulated in Eq.~(\ref{Equ.formulatedproblem}) is NP-hard. 
\end{proof}

\section{Proposed PSAC Algorithm}
\label{psac}
In this section, we propose a Partially Shared Actor-Critic learning algorithm (PSAC) to generate an optimized video list for effectively reducing the maximum startup delay. We first give an overview of PSAC, and then present its algorithm design details.

\subsection{Overview}

PSAC works to find an optimized video list for effectively reducing the maximum startup delay through reinforcement learning. 
Fig.~\ref{FIG:psac} gives a flowchart regarding how PSAC is trained to generate a video list for effectively reducing the maximum startup delay. In terms of the structure of PSAC, it contains an actor and a critic, both of which share some common modules, including the embedding module and the encoder module, in order to reduce the amount of neural network parameters and therefore improve the convergence speed and performance.
In PSAC, the actor is to generate a video list for a given video set and calculate the maximal startup delay of the generated video list according to the token bucket based transmission model in the preceding section, while the critic is trained to output an estimated value to approximate the maximal startup delay achieved by the actor. 
The PSAC is to be trained by a real short video viewing trace, where the actual viewing time of each video is used as its predicted viewing time for the training, and the actor and the critic of PSAC are optimized by using policy gradient and stochastic gradient descent aiming to learn an optimal policy that can generate a video list striving to minimize the maximum startup delay. 

\begin{figure}[htbp]
	\centering		
    \includegraphics[width=0.45\textwidth]{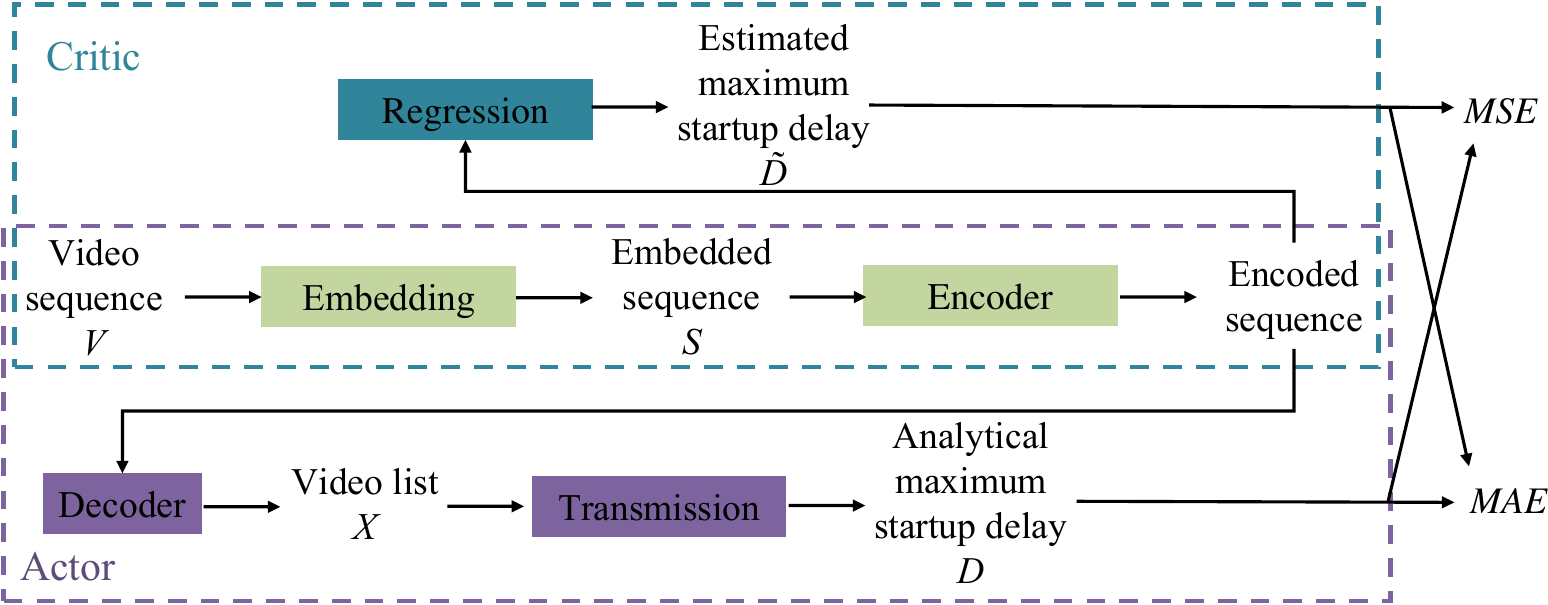}
	\caption{The structure of PSAC algorithm.}
	\label{FIG:psac}
\end{figure}

\subsection{Design}

As shown in Fig.~\ref{FIG:psac}, PSAC consists of actor and critic, where the actor consists of four modules, namely, embedding, encoder, decoder, and transmission modules, and the critic consists of three modules, where the embedding module and encoder module are shared with the actor, and the regression module is independent.

The two shared modules are designed as follows. 
The embedding module is a fully connected layer, and it transforms each video $v$ in the input video set $V$ to an $m$-dimensional vector, where each input video is a 4-dimensional vector including the video's duration, encoding rate, initial segment size, and viewing time, such that an input video set $V$ is embedded to an embedded sequence $S\in \mathbb{R}^{N\times m}$.
The encoder module is a recurrent neural network (RNN) consisting of $N$ Long Short-Term Memory (LSTM) cells each with a hidden size $m$, and it transforms the embedded sequence $S$, one video at a time step, into a sequence of latent memory states ${[{\textit{enc}}_1, {\textit{enc}}_2, \cdots, {\textit{enc}}_{N}]} \in \mathbb{R}^{N\times m}$,  which is called encoded sequence for short, to further characterize the latent relationship among videos.

In the actor, the decoder module is designed as an RNN with a pointing mechanism \cite{vinyals2015pointer} for converting the encoded sequence into a video list $X$.
The structure of the RNN is the same as that of the encoder and the pointing mechanism is implemented by the attention method \cite{bahdanau2014neural}. Then, the video list $X$ is input into the transmission module in the actor, which calculates and outputs the maximal startup delay $D$ of all videos in $X$ according to the token bucket transmission model depicted by Eq.~(\ref{Equ.D}). 

In the critic, the regression module is designed as a two-layer neural network with one $m$-neuron hidden layer and 1-neuron output layer. For the input encoded video sequence, the critic is trained to output an estimated value (denoted by $\tilde{D}$) to approximate $D$ output by the actor. 
To ease the description, we here call $D$ and $\tilde{D} $ as analytical maximum startup delay and estimated maximum startup delay, respectively.

Finally, based on the analytical and estimated maximum startup delays, $D$ and $\tilde{D}$, PSAC calculates their mean square error (MSE) and mean absolute error (MAE), and use them as the loss functions of the critic and the actor to train the modules of embedding, encoder, decoder, and regression.

\subsection{Training}
The training procedure of PSAC is given in Algorithm~\ref{alg:PSAC}.
Given the training set $\mathcal{V}$, the total number of training steps $E$, the batch size $Q$, it first initializes the actor parameter ${\theta}_a$ and the critic parameter ${\theta}_c$, and then iteratively trains the actor and critic by lines 2-13.
In each training step, it works as follows.
Line~3 randomly samples $Q$ video sets $V_1,V_2,\cdots,V_Q$ from the training set $\mathcal{V}$ as a training batch, each $V_i$ including $N$ videos. 
Following that, line~4 embeds each video set $V_i$ into an embedded sequence $S_i$, line~5 uses the encoder to transform each embedded sequence $S_i$ into an encoded sequence, and line~6 uses the decoder to decode each encoded sequence into a video list $X_i$.
For each $X_i$, lines~7 and~8 output the analytical maximum startup delay $D_i$ calculated according to Eq.~(\ref{Equ.D}) and the estimated maximum startup delay $\tilde{D}_i$ output by the regression module, respectively.
Lines~9 and~10 calculate the MAE $\mathcal{L}(\theta _a)$ and MSE $\mathcal{L}(\theta _c)$ based on all $D_i$ and $\tilde{D}_i$, respectively. 
Finally, lines~11 and~12 optimize the parameters of the actor and critic of PSAC, respectively, with the Adam optimizer~\cite{kingma2014adam}.
The above training process is carried out iteratively until the required training steps are reached.

\begin{algorithm}[htbp]
\caption{Training procedure in PSAC.}\label{alg:PSAC}
\begin{algorithmic}[1]
\renewcommand{\algorithmicrequire}{\textbf{Input:}}
\renewcommand{\algorithmicensure}{\textbf{Output:}}
\REQUIRE{Training set $\mathcal{V}$, training steps $E$, batch size $Q$;}
\ENSURE{$\theta _a$;}
\STATE Initializing model parameters ${\theta}_a$ and ${\theta}_c$ of the actor and the critic, respectively;
\FOR{$step$ =1, 2,$\cdots$, $E$}    
    \STATE $V_i \leftarrow $ SampleInput($\mathcal{V}$), for $i =1,2,\cdots Q$;
    \STATE $S_i \leftarrow $ Embedding($V_i$), for $i =1,2,\cdots Q$;
    \STATE ${[{enc}_1, {enc}_2, \cdots, {enc}_{N}]}_i \leftarrow$ Encoder($\theta _a, S_i$), for $i =1,2,\cdots, Q$;
    \STATE $X_i \leftarrow $  Decoder($\theta _a, {[{\textit{enc}}_1, {\textit{enc}}_2, \cdots, {\textit{enc}}_{N}]}_i$), for $i =1,2,\cdots Q$;  
    \STATE $D_i \leftarrow $ TransmissionModel($X_i$), for $i =1,2,\cdots, Q$; 
    \STATE $\tilde{D}_i \leftarrow $  Regression($\theta _c, {[{\textit{enc}}_1, {\textit{enc}}_2, \cdots, {\textit{enc}}_{N}]}_i$), for $i =1,2,\cdots, Q$;     
    \STATE $\mathcal{L}(\theta_a) \leftarrow \frac{1}{Q} \sum_{i=1}^{Q} (D_i- \tilde{D}_i) $;
    \STATE $\mathcal{L}(\theta_c) \leftarrow \frac{1}{Q} \sum_{i=1}^{Q} {(D_i- \tilde{D}_i)}^2$;
    \STATE $\theta _a \leftarrow \rm{ADAM}(\theta _a, \nabla _{\theta _a }\mathcal{L}(\theta _a))$;
    \STATE $\theta _c \leftarrow \rm{ADAM}(\theta _c, \nabla _{\theta _c }\mathcal{L}(\theta _c))$;   
\ENDFOR
\RETURN $\theta _a$.
\end{algorithmic}
\end{algorithm}
\section{Performance Evaluation}
\label{pe}
In this section, we evaluate the performance of our PSAC algorithm by comparing it with different algorithms by trace-driven simulations.

\subsection{Compared Algorithms}

Next, we give four algorithms for comparison with our PSAC algorithm.

\textit{1) Random ordering (RAND):} It randomly orders all the videos to generate a video list.

\textit{2) Interleaved ordering by viewing time (INTL).} It interleaves the videos according to their viewing time in a way such that the resulted video list has the following video viewing time order: smallest, largest, next smallest, next largest, and so on.

\textit{3) Greedy ordering (GRDY).} The idea behind this algorithm is to use a positive-gain video to restore the token bucket each time when the token bucket is about to run out of tokens. Specifically, given a video set, it first divides the positive-gain and negative-gain videos into a positive-gain video subset $X^+$ and a negative-gain video subset $X^-$, where the videos in either subset are both sorted in the ascending order of their net token increment, and then it iteratively inserts positive-gain videos in $X^+$ into $X^-$, one for each time, until all positive-gain videos are inserted into $X^-$. In each iteration, it selects a video with the smallest token net increase in $X ^+$ and inserts it into $X^-$ just before the last video with no extra startup delay (before which all videos have no extra startup delay). Finally, it outputs the $X^-$ as the generated video list.

\textit{4) None-module Shared Actor-Critic learning algorithm (NSAC).} This algorithm is very similar to our PSAC algorithm in structure except that the actor and critic of NSAC do not share any modules.

\subsection{Simulation Settings}

The simulation is based on a desensitized dataset provided by a well-known large-scale short video service provider in 2021, which contains more than 6 million viewing records of 16000 users, where each record contains multiple fields including user id, video id, video duration $T$, video encoding bitrate $\bar{r}$, and viewing time $\tau$. 
For our study, we selected 200 users from the dataset, and divided them into two groups of 160 users and 40 users and used their viewing records as the training set and test set, respectively. 
In addition, these 200 users were chosen as users with ranks in range $[16000 \times 10\% +1, 16000 \times 10\% +200] = [1601, 1800]$, which are ranked immediately after the top 10\% users having highest number of viewed videos (who may include anomalous users such as crawlers). Table \ref{tbl2} gives the key statistics of the 200-user dataset.

\begin{table}[width=\linewidth,cols=4,pos=h]
\caption{Key statistics for the 200-user dataset in our simulations.}\label{tbl2}
\begin{tabular*}{\tblwidth}{@{} LLLL@{} }
\toprule
 & Duration(s) & Viewing time(s) & Bitrate(Mbps) \\
\midrule
Mean & 29.32 & 24.3 & 1.86 \\
Std. deviation & 19.24 & 29.79 & 0.85\\
Minimum & 2 & 0.01 & 0.12\\
Quartile 1 & 13 & 3.32 & 1.23\\
Quartile 2 & 22 & 14.7 & 1.72\\
Quartile 3 & 48 & 33.12 & 2.44\\
Maximum & 62 & 299.75 & 5.59\\
\bottomrule
\end{tabular*}
\end{table}

For training the PSAC and NSAC, we set the training steps $E=20000$, the LSTM hidden size $m=128$, and the Adam optimizer's parameters $(\alpha, \beta_1, \beta_2, \epsilon)=(0.001, 0.9,$ $0.999, 10^{-7} )$ \cite{kingma2014adam}, and we set the batch size $Q=32$, which corresponds to 32 video sets, each consisting of $N$ videos randomly sampled from the viewed videos of a user in the training set, where $N$ is a simulation variable.
Moreover, for the transmission model of short video delivery, we set the burst rate $\hat{r} = 10$ Mbps, at which the server sends the initial segment of any video, while the normal rate is set to the bitrate of a requested video. We set the initial segment length of any video to be the same, i.e., 1 second long. 

For evaluating the performance of these algorithms, for each value of $N$, we prepared 256 video sets, each randomly sampled from the viewed videos of a user in the test set.
To implement an algorithm, we use it to rank each of the 256 video sets, compute the maximum video startup delays, and use the average of the 256 maximum video startup delays as the key performance of that algorithm.

We evaluate the performance of these algorithms under the following three scenarios: 

\begin{itemize}
\item All videos have accurate predicted viewing time (which equal their respective actual viewing time in the dataset) and also have the same bitrate of 2 Mbps. This scenario is used for highlighting the impact of different token bucket parameters (token bucket capacity $C$, token rate $\mu$, and number of the initial tokens $K_{x}$) and video set size $N$ on the startup delay performance by eliminating the impact of video bitrate variations and viewing time prediction error. 
\item All videos have accurate predicted viewing time while have different bitrates (which equal their respective actual bitrates in the dataset). This scenario is used for highlighting the impact of video bitrate variations. In this scenario, we used the actual video bitrates shown in Table~\ref{tbl2} to evaluate the performance of PSAC under varying bitrate.
\item All videos have inaccurate predicted viewing time (which equal their respective actual viewing time but plus a random variable following Gaussian distributions) while having a uniform bitrate of 2 Mbps. This scenario is used for highlighting the impact of viewing time prediction error. In this scenario, we evaluated the performance of these algorithms under different viewing time prediction errors, which are controlled by the standard deviation $\sigma$ of Gaussian distribution.
\end{itemize}

\subsection{Numerical Results}

\textit{1) Impact of token bucket capacity.} 
For this test, we changed the token bucket capacity $C$ from 2 to 10 Mbits with a step size of 1 Mbits, while fixing the token rate $\mu=2$ Mbps, the video set size $N=15$, and the amount of the initial tokens $K_{x} = C$. 
The numerical results are shown in Fig.~\ref{FIG:impactoftokenbucketcapacity}, where the $x$- and $y$-axes represent the ratio of token bucket capacity to the initial segment size, $C/B$, and the average maximum startup delay, respectively.
The curves marked by $\Diamond$, $\triangle$, $\bigcirc$, $\square$, and $\times$ are for RAND, INTL, GRDY, NSAC and PSAC, respectively.
It can be seen that the average maximum startup delay for all algorithms decreases with token bucket capacity increasing, while our PSAC always has the best performance. 
Specifically, PSAC can reduce the average maximum startup delay by up to 56\%, 54\%, 45\% and 37\% compared to RAND, INTL, GRDY and NSAC, which happens at $C/B$ = 2.5, 2.5, 2.5, 4, respectively.
The above results can be attributed to the following reasons. 
First, the use of multidimensional video features is more conducive to finding valuable video ordering clues, resulting in significantly higher performance for PSAC, compared to RAND and INTL, which do not use any video feature or only use one single video feature (viewing time).
Second, GRDY results in severe polarization in the startup delay performance, with the videos in the head part of a list having a very small startup latency and the videos in the rear of the list having a very large startup latency.
In contrast, PSAC tries to equalize the startup delays for different videos as much as possible, to maximally avoid extreme large startup delays.
Third, the partial module sharing design in PSAC results in obviously higher model inference performance than NSAC. 
Moreover, it can be seen the performance gain of PSAC varies with the token capacity, i.e., the gain is larger when $C/B \in$ [2,4] and becomes smaller beyond this region. The reason is as follows. 
First, too small the token capacity causes poor burst transmission capability, making it difficult to effectively cope with the subsequent burst transmission demands under any order of short videos. 
Second, excessively large token capacity can result in enough tokens accumulated during the watching of a long-viewing-time video, so as to satisfy the subsequent burst transmission demands under any order of short videos.
However, large token capacity may consume a lot of network resources and can be prohibitively costly for operators to deploy.

\begin{figure}[htbp]
	\centering		\includegraphics[width=0.45\textwidth]{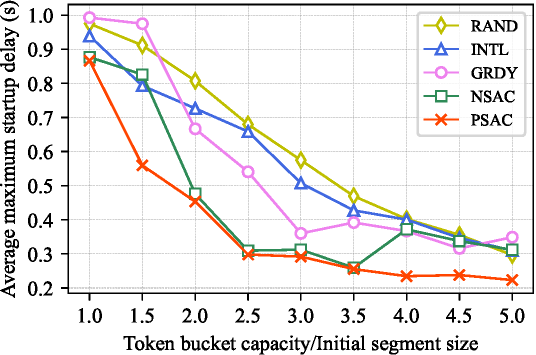}
	\caption{Impact of token bucket capacity $C$ when $\mu=2$ Mbps, $N$=15, and $K_{x}=C$.}	\label{FIG:impactoftokenbucketcapacity}
\end{figure}
\begin{figure}[htbp]
	\centering		
    \includegraphics[width=0.45\textwidth]{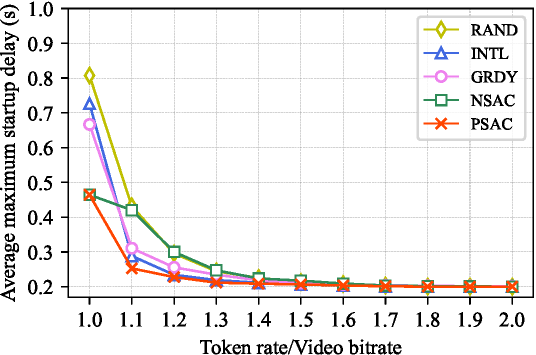}
	\caption{Impact of token rate $\mu$ when $C=4$ Mbits, $N=15$, and $K_{x}=4$ Mb.}
	\label{FIG:impactoftokenrate}
\end{figure}
\begin{figure}[htbp]
	\centering		\includegraphics[width=0.45\textwidth]{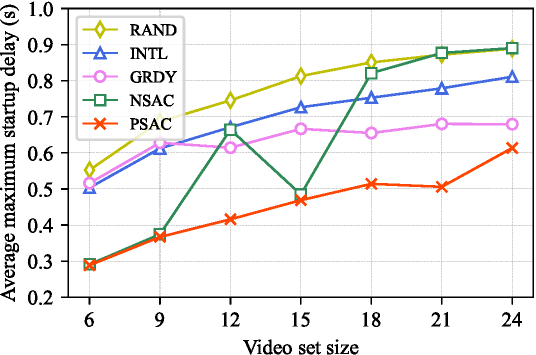}
	\caption{Impact of video set size $N$ when $C=4$ Mbits, $\mu=2$ Mbps, and $K_{x}=4$ Mb.}
	\label{FIG:impactofvideosetsize}
\end{figure}
\begin{figure}[htbp]
	\centering		\includegraphics[width=0.45\textwidth]{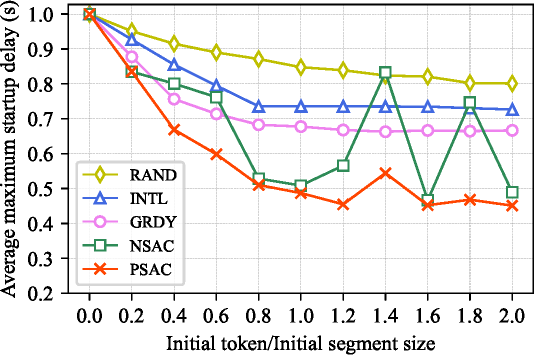}
	\caption{Impact of amount of initial tokens $K_{x}$ when $C=4$ Mbits, $\mu=2$ Mbps, and $N=15$.}
	\label{FIG:impactofinitialtoken}
\end{figure}
\begin{figure}[htbp]
	\centering		\includegraphics[width=0.45\textwidth]{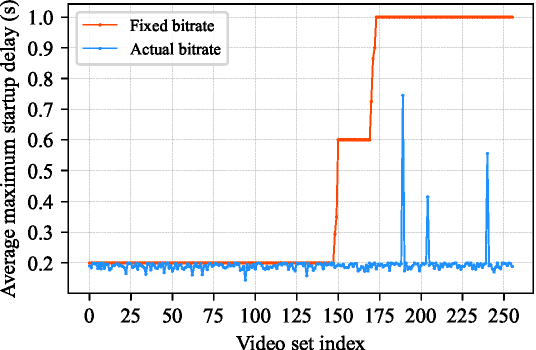}
	\caption{Performance of PSAC under actual and fixed bitrates.}
	\label{FIG:impactofbitrate}
\end{figure}
\begin{figure}[htbp]
	\centering		\includegraphics[width=0.45\textwidth]{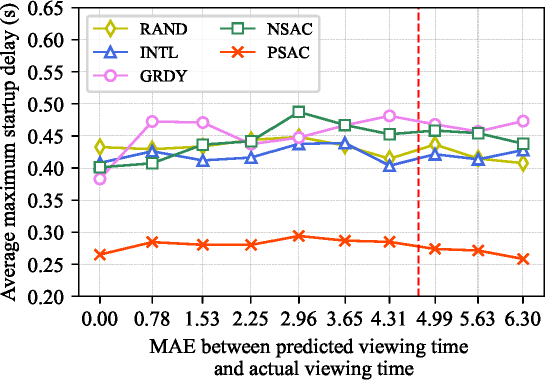}
	\caption{Impact of errors in viewing time prediction when $C=8$ Mbits, $K_x=8$ Mbits, $\mu=2$ Mbps, and $N=15$.}	\label{FIG:impactofviewingtimepredictionerror}
\end{figure}

\textit{2) Impact of token rate.} For this test, we changed the token rate $\mu$ from 2 to 4 Mbps with a step size of 0.2 Mbps, while fixing the token bucket capacity $C=4$ Mbits, video set size $N=15$, and the amount of the initial tokens $K_{x}=4$ Mbits. 
The numerical results are shown in Fig.~\ref{FIG:impactoftokenrate}, where the $x$- and $y$-axes represent the ratio of token rate to bitrate $\mu/\bar{r}$ and the average maximum startup delay, respectively.
Fig.~\ref{FIG:impactoftokenrate} shows that, as the token rate increases, the average maximum startup delays of all algorithms gradually converges to the same level of 0.2s, which is the time the server bursts out an initial segment.
Specifically, when the token rate is large enough, say $\mu/\bar{r} > 1.6$, the video startup delay is largely determined by the transmission model itself while almost irrelevant to how the videos are ordered.
However, when the token rate is low, say $\mu/\bar{r} \le 1.3$, PSAC achieves significantly lower startup delay. Specifically, PSAC can respectively reduce the average maximum startup delay by 43\%, 36\% and 30\% compared to RAND, INTL and GRDY at $\mu/\bar{r}=1$, and can reduce the average maximum startup delay by 40\% compared to NSAC at $\mu/\bar{r}=1.1$.
The insight of the above results is that, the smaller the token rate, the slower the rate of token accumulation, and thus the more the necessity to use a high-performance video ordering algorithm like PSAC to properly arrange the video order to accumulate more tokens for subsequent video burst transmissions so as to improve the startup performance.

\textit{3) Impact of video set size.}
For this test, we changed the video set size $N$ from 6 to 24 with a step size of 3, while fixing $C=4$ Mbits, $\mu=2$ Mbps, and $K_{x}=4$ Mbits. 
The numerical results are shown in Fig.~\ref{FIG:impactofvideosetsize}, where the $x$- and $y$-axes represent the video set size and the average maximum startup delay, respectively.
Specifically, PSAC can reduce the average maximum startup delay by up to 48\%, 42\%, 44\%, and 42\% compared to RAND, INTL, GRDY, and NSAC, which happens at $N=6, 6, 6, 21$, respectively. 
Such large performance gap demonstrates the significant advantage of our PSAC algorithm in dealing with complex situations because of the following reasons: As $N$ increases, there will be more space for optimizing/reducing the startup delay, and our algorithm has the ability to find a better video order from this increased space.

\textit{4) Impact of amount of initial tokens.}
For this test, we changed the amount of initial tokens $K_{x}$ from 0 to 4 Mbits with a step size of 0.4 Mbits, while fixing $C=4$ Mbits, $\mu=2$ Mbps, and $N=15$. 
The numerical results are shown in Fig.~\ref{FIG:impactofinitialtoken}, where the $x$- and $y$-axes represent the ratio of initial token to the initial segment size, $K_{x}/B$, and the average maximum startup delay, respectively.
Specifically, at $K_{x}/B=0$, the average maximum startup delay of all algorithms is the same, which is contributed to the fact that the first video in the video list will inevitably experience the maximum 1-second startup delay. At $K_{x}/B=1.2$, PSAC can reduce the average maximum startup delay by up to  46\%, 38\% and 32\% compared to RAND, INTL, and GRDY respectively. By comparing NSAC and PSAC, PSAC can reduce the average maximum startup delay by up to 34\% at $K_{x}/B=1.4$, which demonstrates that partial module sharing can make the model easier to converge, resulting in a more stable and superior performance.

\textit{5) Performance of PSAC under fixed and actual bitrates.} 
In this test, we evaluated the performance of PSAC under fixed and actual bitrates using the same 256 video sets.
For this test, $C=4$ Mbits, $\mu=2$ Mbps, $N=15$, and  $K_{x}=4$ Mbits.
Moreover, we removed the viewing records with video bitrates higher than the 2-Mbps token rate.
For the testing results, we sorted and numbered these video sets according to their maximum startup latency under the fixed bitrate.
The numerical results are shown in Fig.~\ref{FIG:impactofbitrate}, where the $x$- and $y$-axes represent the video set index and the maximum startup delay, respectively. 
It can be seen that the maximum startup delay under fixed bitrate (say orange curve) is an upper bound to the maximum startup delay under the actual bitrates (say blue curve), which is mainly attributed to the fact that the actual bitrates of all videos are always smaller than or equal to the fixed bitrate we set. 
On the one hand, for the transmission of the initial segment of a video, a smaller video bitrate means a smaller initial segment size and a shorter time for the server to send out the initial segment. 
On the other hand, for the transmission of the remaining video segments, a smaller video bitrate means smaller token consumption and a faster speed for the token bucket to accumulate tokens to cope with the burst transmission of subsequent videos.

\textit{6) Impact of errors in predicted viewing time.}
For this test, we used ten zero-mean Gaussian distributions, each with a different standard deviation $\sigma=0,1,\cdots,9$, to noise the actual viewing time of the videos to generate their respective predicted viewing time, in order to evaluate the robustness of PSAC to the viewing time prediction errors. All predicted viewing time is forced to be greater than 0.
In this test, we fixed $C=8$ Mbits, $K_x=8$ Mbits, $\mu=2$ Mbps, and $N=15$. 
The numerical results are shown in Fig.~\ref{FIG:impactofviewingtimepredictionerror}, where the $x$- and $y$-axes represent the MAE between the predicted viewing time and the actual viewing time, and the average maximum startup delay, respectively. The MAE value for each point in Fig.~\ref{FIG:impactofviewingtimepredictionerror} corresponds to a specific Gaussian distribution.
The red dash line indicates the MAE (4.741) that can be achieved by the existing prediction algorithm reported in the literature \cite{Lin2023tree}.
Fig.~\ref{FIG:impactofviewingtimepredictionerror} shows that PSAC always significantly outperform the other compared algorithms for any value of MAE. 
It further confirms that, compared to other algorithms, our PSAC has a stronger ability to mine valuable ordering clues from video features, and has a sufficiently high tolerance for different MAE values.
In particular, as shown in the figure, at MAE=4.99, which is near the MAE level that existing viewing time prediction algorithms can achieve, PSAC can reduce the average maximum startup delay by 37\%, 35\%, 40\%, and 41\%, respectively, compared to RAND, INTL, GRDY and NSAC.

\section{Conclusion}
\label{con}
In this paper, we studied short video ordering for reducing the startup delay. 
We first modeled the short video transmission path as a token bucket and proposed the idea to interleave short-viewing-time videos and long-viewing-time videos in a video list to make the tokens consumed by each transmission of a short-viewing-time video replenished by the next long-viewing-time video’s transmission as much as possible so as to allow each video to have a small startup delay. 
We formulated the video ordering problem as a combinatorial optimization problem, proved its NP-hardness, and proposed the PSAC reinforcement learning algorithm to learn an optimized video ordering strategy to transform an input video set into a video list for achieving minimized video startup delay. 
Numerical results based on a real dataset demonstrated that the proposed PSAC algorithm can significantly reduce the startup delay of videos compared to baseline algorithms.

In the future, we shall explore how to integrate the idea of video ordering and prefetching together for further improved startup delay performance while minimizing the potential downloaded data waste caused by randomized video switching at user side. 

\appendix
\section{Proof of Eq.~(\ref{Equ.d_{x_i}-b})}
\label{appendix_a}

Here we give the proof of Eq.~(\ref{Equ.d_{x_i}-b}), i.e., the video startup delay $d_{x_i} =(B_{x_i}-K_{x_i})/\mu$
in the case  $K_{x_i}+\mu B_{x_i}/\ \hat{r}< B_{x_i}$.
\begin{proof}
In this case, the token bucket decomposes the transmission of the initial segment into two stages with different rates: The burst rate and the token rate, respectively. Obviously, the moment the tokens run out is the time instant separating these two stages.

Without loss of generality, assume that the initial segment’s transmission starts at time instant 0. Suppose the time instant when the token bucket is exhausted is $t$, and it should satisfy $\hat{r}t = K_{x_i}+\mu t$, where the left side is the data amount sent out by the server by time $t$, and it just exhausts the totally   accumulated available tokens that are shown in the right side. Therefore, we have 
\begin{equation}
    t= K_{x_i}/(\hat{r}-\mu).
    \label{Equ.appendix_t}
\end{equation}

After time $t$, the token bucket transmits the remaining data of the initial segment, $B_{x_i}-\hat{r}t$, at the token rate $\mu$, and the transmission time $t'$ is calculated as follows.
\begin{equation}
    t' = (B_{x_i}-\hat{r}t)/\mu.
    \label{Equ.appendix_t'}
\end{equation}

Finally, the startup delay for video $x_i$ is calculated as follows according to Eqs.~(\ref{Equ.appendix_t}) and (\ref{Equ.appendix_t'}).
\begin{equation}
    \begin{aligned}
        d_{x_i} &=t+t'\\
               &=(B_{x_i}-K_{x_i})/\mu
    \end{aligned}    
\end{equation}

Therefore, Eq.~(\ref{Equ.d_{x_i}-b}) is proved. 

\end{proof}
\printcredits

\bibliographystyle{elsarticle-num}

\bibliography{manuscript}

\end{document}